\def\rank{\mathop{\rm rank}}
\def\wgt{\mathop{\rm wgt}}
\def\ker{\mathop{\rm Ker}}
\def\im{\mathop{\rm Im}}
\newtheorem{theorem}{Theorem}
\begin{document}
\title{Higher-dimensional quantum hypergraph-product codes}
\date\today
%\date{\today\ \bf \jobname} 
%\date{\today\ \currenttime\ \bf \jobname} 
\author{Weilei Zeng}
\address{Department of Physics \& Astronomy, University of California,
  Riverside, California 92521, USA}
\author{Leonid P.\ Pryadko}
\address{Department of Physics \& Astronomy, University of California,
  Riverside, California 92521, USA}
\email{leonid.pryadko@ucr.edu}

\begin{abstract}
  We describe a family of quantum error-correcting codes which
  generalize both the quantum hypergraph-product (QHP) codes by
  Tillich and Z\'emor, and all families of toric codes on
  $m$-dimensional hypercubic lattices.  Similar to the latter, our
  codes form $m$-complexes ${\cal K}_m$, with $m\ge2$.  These are
  defined recursively, with ${\cal K}_m$ obtained as a tensor product
  of a complex ${\cal K}_{m-1}$ with a $1$-complex parameterized by a
  binary matrix.  Parameters of the constructed codes are given
  explicitly in terms of those of binary codes associated with the
  matrices used in the construction.
\end{abstract}
\maketitle

Quantum low-density parity-check (q-LDPC) codes is the only class of
codes known to combine finite rates with non-zero fault-tolerant (FT)
thresholds\cite{Kovalev-Pryadko-FT-2013,Dumer-Kovalev-Pryadko-bnd-2015},
to allow scalable quantum computation with a finite
overhead\cite{Gottesman-overhead-2014}.  However, unlike in the
classical case where capacity-approaching codes can be constructed
from random sparse matrices\cite{Gallager-1962,Gallager-book-1963,%
  Litsyn-Shevelev-2002,%
  Richardson-Shokrollahi-Amin-Urbanke-2001}, matrices suitable for
constructing quantum LDPC codes are highly atypical in the
corresponding ensembles.  Thus, an algebraic ansatz is required to
construct large-distance q-LDPC codes.  Preciously few examples of
such algebraic constructions are known that give finite rate codes and
also satisfy conditions\cite{Dumer-Kovalev-Pryadko-bnd-2015} for
fault-tolerance: bounded weight of stabilizer generators and minimum
distance that scales logarithmically or faster with the block length
$n$.  Such constructions include hyperbolic codes in
two\cite{Zemor-2009,Delfosse-Zemor-2010,Breuckmann-Terhal-2015,%
  Breuckmann-Vuillot-Campbell-Krishna-Terhal-2017} and higher
dimensions\cite{Guth-Lubotzky-2014}, and quantum hypergraph-product
(QHP) and related codes\cite{Tillich-Zemor-2009,Tillich-Zemor-2014,%
  Kovalev-Pryadko-Hyperbicycle-2013}.  In addition, some
constructions, e.g., in Refs.\
\onlinecite{Couvreur-Delfosse-Zemor-2012,Bravyi-Hastings-2013,Audoux-2014,%
  Audoux-Couvreur-2017,Hastings-codes-2016}, have finite rates and
relatively high distances, with the stabilizer generator weights that
grow with $n$ logarithmically.  It is not known whether these codes
have non-zero FT thresholds.  However, such codes can be modified into those
with provable FT thresholds with the help of weight
reduction\cite{Hastings-weight-2016}.

There is more variety for topological codes, which can be viewed as
generalized toric codes\cite{Bravyi-Kitaev-1998,Freedman-Meyer-1998,%
  Dennis-Kitaev-Landahl-Preskill-2002,Bombin-MartinDelgado-2007,%
  Castelnovo-Chamon-2008,Mazac-Hamma-2012,%
  Bombin-Chhajlany-Horodecki-MartinDelgado-2013} invented by
Kitaev\cite{kitaev-anyons}.  Such a code can be constructed from any
tessellation of an arbitrary surface or a higher-dimensional manifold.
Essential advantage of topological codes is locality: each stabilizer
generator involves only the qubits in the immediate vicinity of each
other; it is this feature that makes planar surface codes so
practically attractive.  However, locality also limits the parameters
of topological
codes\cite{Bravyi-Terhal-2009,Bravyi-Poulin-Terhal-2010,%
  Delfosse-2013,Flammia-Haah-Kastoryano-Kim-2017}.  In particular, for
a code of length $n$ with stabilizer generators local in two
dimensions, the number of encoded qubits $k$ and the minimal distance
$d$ satisfy the inequality\cite{Bravyi-Terhal-2009}
$kd^2\le {\cal O}(n)$.  This implies asymptotically zero rate whenever
$d$ diverges with $n$.

In this work we construct a family of q-LDPC codes that generalize the
QHP codes\cite{Tillich-Zemor-2009,Tillich-Zemor-2014} to higher
dimensions, and explicitly calculate their parameters, including the
minimum distances.  Our codes relate to toric codes on hypercubic
lattices\cite{Dennis-Kitaev-Landahl-Preskill-2002,Bombin-MartinDelgado-2007,%
  Castelnovo-Chamon-2008,Mazac-Hamma-2012,%
  Bombin-Chhajlany-Horodecki-MartinDelgado-2013} in exactly the same
fashion as the QHP codes relate to the square-lattice toric code.
Just as different $m$-dimensional toric codes on a hypercubic lattice
are parts of an $m$-complex\cite{Bombin-MartinDelgado-2007}, here we
also construct $m$-complexes, chain complexes with $m$ non-trivial
boundary operators.  Our construction is recursive: it defines an
$m$-complex ${\cal K}_{m}$ as a tensor product of a shorter chain
complex ${\cal K}_{m-1}$ and a $1$-complex ${\cal K}_1$, a linear map
between two binary vector spaces.  In particular, the construction of
the $2$-complex ${\cal K}_2$ in terms of two binary matrices is
identical to QHP
codes\cite{Tillich-Zemor-2009,Tillich-Zemor-2014}.

Previously, related constructions have been considered in
Refs.~\onlinecite{Hastings-weight-2016,Audoux-Couvreur-2017,Campbell-2018}.
Hastings\cite{Hastings-weight-2016} only considered products with
1-complexes which correspond to classical repetition codes, in
essence, the same construction that appears in ``space-time'' codes
used in the analysis of repeated syndrome
measurement\cite{Wang-Harrington-Preskill-2003,%
  Kovalev-Pryadko-FT-2013,Dumer-Kovalev-Pryadko-bnd-2015}.  On the
other hand, Audoux and Couvreur\cite{Audoux-Couvreur-2017} and
Campbell\cite{Campbell-2018} only considered products of
$2$-complexes.  Their lower bounds on code distances are not generally
as strong as ours.

In addition to defining new classes of quantum LDPC codes with
parameters known explicitly, our construction may be useful for
optimizing repeated measurements in the problem of fault-tolerant (FT)
quantum error correction, related problem of single-shot error
correction\cite{Fujiwara-2014,Ashikhmin-Lai-Brun-2014,%
  Ashikhmin-Lai-Brun-2016,Campbell-2018},
analysis of transformations between different QECCs, like the
distance-balancing trick by Hastings\cite{Hastings-weight-2016}, and
construction of asymmetric quantum CSS codes optimized for operation
where error rates for $X$ and $Z$ channels may differ
strongly\cite{Ioffe-Mezard-2007,Evans-2007,Stephens-2008,%
  Aliferis-Preskill-2008,sarvepalli-2009,Tuckett-Bartlett-Flammia-2018}.

We start with a brief overview of error correcting codes and chain
complexes, see, e.g., Refs.\
\onlinecite{Weibel-book-1994,Bombin-MartinDelgado-2007,%
  Audoux-Couvreur-2017,MS-book,gottesman-thesis,%
  Nielsen-book,Calderbank-1997} for much more information.  A
classical binary linear code ${\cal C}$ with parameters $[n,k,d]$ is a
$k$-dimensional subspace of the vector space $\mathbb{F}_2^n$ of all
binary strings of length $n$.  Code distance $d$ is the minimal
Hamming weight of a nonzero string in the code.  A code
${\cal C} \equiv {\cal C}_G$ can be specified in terms of the
generator matrix $G$ whose rows are the basis vectors of the code.
All vectors orthogonal to the rows of $G$ form the dual code
${\cal C}^\perp_G = \{c\in \mathbb{F}_n^2\vert Gc^T = 0\}$.  Matrix
$G$ is called the parity check matrix of the code ${\cal C}^\perp_G$.

Given an \emph{index set} $I\subseteq \{1,2,\ldots,n\}$ of length
$|I|=r$, and a string $c\in\mathbb{F}_2^n$, let
$c[I]\in\mathbb{F}_2^r$ be a substring of $c$ with the bits at all
positions $i\not\in I$ dropped.  Similarly, for an $n$-column matrix
$G$ with rows $g_j$, $G[I]$ is formed by the rows $g_j[I]$.  If
${\cal C}={\cal C}_G$ is a linear code with the generating matrix $G$,
the \emph{punctured} code ${\cal C}_p[I]\equiv\{c[I]: c\in{\cal C}\}$
is a linear code of length $|I|$ with the generating matrix $G[I]$.
The \emph{shortened} code ${\cal C}_s[I]$ is formed similarly, except
only from the codewords which have all zero bits outside $I$,
${\cal C}_s[I]=\{c[I]:c=(c_1,c_2,\ldots,c_n)\in {\cal C}$ and $c_i=0$
for each $i\not\in I\}$.  If ${\cal C}={\cal C}_P^\perp$ has the
parity check matrix $P$, $P[I]$ is the parity check matrix of the
shortened code ${\cal C}_s[I]$.

A \emph{chain complex} is a sequence of finite-dimensional vector
spaces $\ldots,{\cal A}_{j-1},{\cal A}_j,\ldots $ with \emph{boundary}
operators $\partial_j:{\cal A}_{j-1}\leftarrow {\cal A}_j $ that map
between each pair of neighboring spaces, with the requirement
$\partial_j\partial_{j+1}=0$, $j\in\mathbb{Z}$.  In this work we only
consider vector spaces ${\cal A}_j=\mathbb{F}_2^{n_j}$ formed by
binary vectors of length $n_j\ge0$, and define an $m$-complex
${\cal A}\equiv {\cal K}(A_1,\ldots,A_m)$, a length-$(m+1)$ chain
complex with a basis, in terms of $n_{j-1}\times n_j$ binary matrices
$A_j$ serving as the boundary operators,
\begin{equation}
  \label{eq:chain-complex}
{\cal A}:\;
  \{0\}\stackrel{\partial_0}\leftarrow {\cal A}_0\stackrel{A_1}\leftarrow
  {\cal A}_1\ldots \stackrel{A_{m}}\leftarrow
  {\cal A}_{m}\stackrel{\partial_{m+1}}\leftarrow \{0\},
\end{equation}
where the neighboring matrices must be mutually orthogonal,
$A_{j-1}A_{j}=0$, $j\in\{1,\ldots,m\}$.  In addition to boundary
operators given by the matrices $A_j$, implicit are the trivial
operators $\partial_0:\{0\}\leftarrow {\cal A}_0$ and
$\partial_{m+1}: {\cal A}_m\leftarrow \{0\}$ treated formally as zero
$0\times n_0$ and $n_m\times 0$ matrices.

Elements of the subspace $\im (\partial_{j+1})\subseteq {\cal A}_j$ are
called boundaries; in our case these are linear combinations of columns
of $A_{j+1}$ and, therefore, form a binary linear code
with the generator matrix $A_{j+1}^T$,
$\im (A_{j+1})=\mathcal{C}_{A_{j+1}^T}$.  In the singular case $j=m$,
$\im(\partial_{m+1})=\{0\}$, a trivial vector space.  Elements of
$\ker(\partial_j)\subset {\cal A}_j$ are called cycles; in our case these are vectors
$x$ in ${\cal A}_j$ orthogonal to the rows of $A_{j}$, $A_jx^T=0$.  This
defines a binary linear code with the parity check matrix $A_j$,
$\ker (A_j)=\mathcal{C}_{A_j}^\perp$.  In the singular case $j=0$,
$\ker(\partial_0)={\cal A}_0$.

Because of the orthogonality $\partial_j\partial_{j+1}=0$, all
boundaries are necessarily cycles,
$\im(\partial_{j+1})\subseteq \ker(\partial_j) \subseteq {\cal A}_j$.
The structure of the cycles in ${\cal A}_j$ that are not boundaries is
described by the $j$\,th homology group,
\begin{equation}
  {H}_j({\cal A})\equiv H(A_j,A_{j+1})=
  \ker(A_{j})/\im(A_{j+1}).
\label{eq:homo-group}
\end{equation}
Group quotient here means that two cycles [elements of $\ker (A_j)$]
that differ by a boundary [element of $\im(A_{j+1}$] are considered
equivalent; non-zero elements of $\mathcal{H}_j(\mathcal{A})$ are
equivalence classes of homologically non-trivial cycles.  We denote
the equivalence as $x\stackrel{A_{j+1}}\simeq y\in {\cal A}_j$, or just
$x\simeq y$.  Explicitly,
this implies that for some $\alpha\in {\cal A}_{j+1}$, $y=x+A_{j+1}\alpha$.
The rank of $j$-th homology group is the dimension of the
corresponding vector space; one has
\begin{equation}
  \label{eq:homo-rank}
  k_j\equiv \rank H_j(\mathcal{A})=n_j-\rank A_j-\rank A_{j+1}  .
\end{equation}
The homological \emph{distance} $d_j$ is the minimum Hamming weight of
a non-trivial element (any representative) in the homology group
$H_j(\mathcal{A})\equiv H(A_j,A_{j+1})$,
\begin{equation}
  d_j=\min_{ 0\not\simeq x\in H_j(\mathcal{A})} \wgt x
  =\min_{x\in \ker({A}_j)\setminus \im(A_{j+1})}\wgt x.
  \label{eq:homo-distance}
\end{equation}
By this definition, $d_j\ge1$.  To address singular cases, throughout
this work we assume that the minimum of an empty set is an infinity;
$k_j=0$ always implies $d_j=\infty$.

For an alternative definition, the rightmost expression in
Eq.~(\ref{eq:homo-distance}) treats vector spaces as sets.  Thus, to
calculate the distance $d_0$ of the homology group $H_0(\mathcal{A})$,
we have to take the minimum weight of all vectors $x\in C_0$ except
those that can be obtained as linear combinations of columns of $A_1$
[these form a binary linear code\cite{MS-book} $\mathcal{C}_{A_1^T}$
with the generator matrix $A_1^T$].  The result is $d_0=1$, unless
$A_1$ has a full row rank, giving $k_0=0$, in which case our
convention gives $d_0=\infty$.  

Similarly, in the case of the homology group $H_{m}(\mathcal{A})$, the
distance $d_{m}$ is the minimum weight of a non-zero $x\in C_m$ such
that $A_{m}x^T=0$.  In this case $d_{m}$ is also the distance of a
binary classical code $\mathcal{C}^\perp_{A_m}$ with the parity check
matrix $A_m$.  Again, our convention gives $d_m=\infty$ if $k_m=0$,
which happens when $A_m$ has full column rank.

In addition to the homology group $H(A_j,A_{j+1})$, there is also a
generally distinct \emph{co-homology} group
$\tilde{H}_j(\tilde{\cal A})=H(A_{j+1}^T,A_j^T)$ of the same rank
(\ref{eq:homo-rank}); this is associated with the \emph{co-chain
  complex} $\tilde{A}$ formed from the transposed matrices $A_j^T$
taken in the opposite order.  A quantum Calderbank-Shor-Steane (CSS)
code\cite{Calderbank-Shor-1996,Steane-1996} with generator matrices
$G_X=A_j$ and $G_Z=A_{j+1}^T$ is isomorphic with the direct sum of the
groups $H_j$ and $\tilde{H}_j$,
\begin{equation}
  \label{eq:css-code}
  \mathcal{Q}(A_j,A_{j+1}^T)\cong H(A_j,A_{j+1})\oplus H(A_{j+1}^T,A_j^T).
\end{equation}
The two terms correspond to $Z$ and $X$ logical operators,
respectively.  The code distance can be expressed as a minimum over
the distances $d_j$ and $\tilde{d}_j$ of the two homology groups.
Parameters of such a code are written as
$[[n_j,k_j,\min(d_j,\tilde{d}_{j})]]$.

Tensor product $\mathcal{A}\times \mathcal{B}$ of two chain complexes
$\mathcal{A}$ and $\mathcal{B}$ is defined as the chain complex formed
by linear spaces decomposed as direct sums of Kronecker products,
\begin{equation}
(\mathcal{A}\times \mathcal{B})_l=\bigoplus\nolimits_{i+j=l}\mathcal{A}_i \otimes
\mathcal{B}_{j},\label{eq:tp-spaces}
\end{equation}
with the action of the boundary operators 
\begin{equation}
  \label{eq:tp-boundary}
  \partial_{i+j}(a\otimes b)\equiv\partial_i' a\otimes b+(-1)^i a\otimes
  \partial_j'' b, 
\end{equation}
where $a\in\mathcal{A}_i$, $b\in\mathcal{B}_j$, and the boundary
operators $\partial_i'$ and $\partial_j''$ belong to the chain
complexes $\mathcal{A}$ and $\mathcal{B}$, respectively.  When both
$\mathcal{A}$ and $\mathcal{B}$ are \emph{bounded}, that is, they
include only a finite number of non-trivial spaces, the dimension
$n_j(\mathcal{C})$ of a space $\mathcal{C}_j$ in the product
$\mathcal{C}=\mathcal{A}\times \mathcal{B}$ is
\begin{equation}
  \label{eq:tp-nj}
n_j (\mathcal{C})=\sum\nolimits_{i} n_i (\mathcal{A}) \,n_{j-i}
  (\mathcal{B}).
\end{equation}
The homology groups of the product $\mathcal{C}={\cal A}\times {\cal B}$ are
isomorphic to a simple expansion in terms of those of ${\cal A}$ and
${\cal B}$ which is given by the K\"unneth theorem,
\begin{equation}
  \label{eq:tp-Kunneth}
  H_j (\mathcal{C})\cong\bigoplus\nolimits_{i} H_i (\mathcal{A})\,
  \otimes\,H_{j-i} 
  (\mathcal{B}).
\end{equation}
One immediate consequence is that the rank $k_j(\mathcal{C})$ of the
$j$\,th homology group $H_j(\mathcal{C})$ is 
\begin{equation}
  \label{eq:tp-kj}
  k_j (\mathcal{C})=\sum\nolimits_{i} k_i (\mathcal{A}) \,k_{j-i}
  (\mathcal{B}).
\end{equation}
%Further, in the case of chain complexes with the isomorphism 

Our first result is an upper bound on the distances of the
homological groups in a chain complex ${\cal A}\times {\cal B}$, an
immediate extension of Cor.\ 2.14 from
Ref.~\onlinecite{Audoux-Couvreur-2017},
\begin{equation}
  d_j(\mathcal{C})\le \min_i d_i({\cal A})\, d_{j-i}({\cal B}).
  \label{eq:tp-dj-upper-bnd}  
\end{equation}
\begin{proof}[Proof of Eq.~(\ref{eq:tp-dj-upper-bnd}).]
  This is a consequence of a version of the K\"unneth theorem for a
  pair of chain complexes with chosen bases, see Proposition 1.13 in
  Ref.~\onlinecite{Audoux-Couvreur-2017}. Namely, if, for each
  $r\in\mathbb{Z}$, the sets $X_r\subset {\cal A}_r$ and
  $Y_r\subset {\cal B}_r$ induce bases for $H_r({\cal A})$ and
  $H_r({\cal B})$, respectively, then, for every $j\in\mathbb{Z}$, the
  vectors in the set
  \begin{equation}
    Z_j=\{x\otimes y|i\in\mathbb{Z},x\in X_i,y\in Y_{j-i}\}
    \label{eq:tp-basis}
  \end{equation}
  induce a basis for $H_j({\cal A}\otimes {\cal B} )$.  Now, if we
  choose each of the sets $X_r$ and $Y_r$ to contain the corresponding
  minimum-weight vectors, minimum weight of the elements of the set
  (\ref{eq:tp-basis}) equals to the r.h.s.\ in
  Eq.~(\ref{eq:tp-dj-upper-bnd}).  The homology group is trivial,
  $k_j({\cal A}\otimes {\cal B})=0$ and $Z_j=\emptyset$, only if at
  least one of the sets in each pair $\{a_i,b_{j-i}\}$,
  $i\in\mathbb{Z}$ is empty, which implies that the corresponding
  product $d_i({\cal A})d_{j-i}({\cal A})$ be infinite, consistent
  with the result given by our convention, $d_j({\cal C})=\infty$
  whenever $k_j({\cal C})=0$.
\end{proof}
Our second result is a lower bound on the distance for the special
case where ${\cal B}={\cal K}(P)$ is a $1$-complex induced by an
$r\times c$ binary matrix $P$.  This bound matches the upper bound in
Eq.~(\ref{eq:tp-dj-upper-bnd}), and thus ensures the equality for the
case where ${\cal B}$ is a $1$-complex.  This expression,
\begin{equation}
  \label{eq:tp-dj-equality}
  d_j(\mathcal{A}\times {\cal B})=\min \biglb( d_{j-1}({\cal A})\,d_1({\cal
    B}), d_{j}({\cal  A})\,d_0({\cal B}) \bigrb),
\end{equation}
where $ {\cal B}={\cal K}(P)$ is a $1$-complex, is our main result.

With $\mathcal{A}$ the $m$-complex in
Eq.~(\ref{eq:chain-complex}), the tensor product
$ \mathcal{C}\equiv \mathcal{A}\times \mathcal{B}$ can be written as
an $(m+1)$-complex, $\mathcal{C}=\mathcal{K}(C_1,\ldots,C_{m+1})$,
with the block matrices
\begin{equation}
  \label{eq:tp-matrices}
      C_{j+1}=\left(
           \begin{array}[c]{c|c}
             A_{j+1}\otimes E_r&(-1)^{j}E_{n_{j}}\otimes P\\ \hline
                           & A_{j}\otimes E_c
           \end{array}\right), 
\end{equation}
where $E_r$ denotes the $r\times r$ identity matrix.  The sign in the
top-right corner ensures orthogonality $C_j C_{j+1}=0$; in our case
signs have no effect since we are only considering binary spaces.  We
also notice that since $\partial_0$ and $\partial_{m+1}$ in
$\mathcal{A}$ are both trivial, matrices $C_1$ and $C_{m+1}$,
respectively, will be missing the lower and the left block pairs.  If
we denote $u\equiv \rank P$, the two  homology groups
associated with $\mathcal{B}$ have ranks
${\kappa}_0\equiv k_0(\mathcal{B})=r-u$ and
${\kappa}_1\equiv k_1(\mathcal{B})=c-u$, respectively.  Equations
(\ref{eq:tp-nj}) and (\ref{eq:tp-kj}) give in this case,
\begin{equation}
  \label{eq:stp-n-and-k}
  n_j'=n_{j-1} c+n_{j}r\ \text{\ and\ }\ k_j'= k_{j-1}{\kappa}_1+k_{j}\kappa_0,
\end{equation}
where we use the primes to denote the parameters of ${\cal C}$,
$n_j'\equiv n_j({\cal C})$ and $k_j'\equiv k_j({\cal C})$.  %%% The first
%%% expression is self-evident.
% A self-contained derivation of the
% expression for $k_j'$ is given in the online supplement.
We now prove the claimed lower bound for the distance:
\begin{theorem}
  \label{th:lower-distance-bnd}
  Consider $m$-complex ${\cal A}$ in Eq.~(\ref{eq:chain-complex}), and
  assume that homological groups $H_j({\cal A})$ have distances $d_j$,
  $0\le j \le m$.  Given an $r\times c$ binary matrix $P$ of rank $u$,
  construct matrices $C_j$ in Eq.~(\ref{eq:tp-matrices}).  Denote
  $\delta$ the minimum distance of a binary code with the parity check
  matrix $P$; by our convention, $\delta=\infty$ if $u=c$.  The
  minimum distance $d_j'\equiv d_j({\cal C})$ of the homology group
  $H(C_j,C_{j+1})$, $0\le j\le m+1$,
  satisfies the following lower bounds:\\ \emph{(i)} if $r>u$, 
  $d_j'\ge \min(d_j,d_{j-1}\delta)$, otherwise,\\
  \emph{(ii)}  if $r=u$,
  $d_j'\ge d_{j-1}\delta$.
\end{theorem}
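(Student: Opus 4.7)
My plan is to prove the contrapositive, showing that any cycle $x=(Y,Z)\in\mathcal{C}_j$ with $C_jx=0$ is either a boundary or has weight at least the claimed bound. Following the block form of Eq.~(\ref{eq:tp-matrices}), I identify $Y$ with an $n_j\times r$ matrix carrying columns $y_s\in\mathcal{A}_j$ and $Z$ with an $n_{j-1}\times c$ matrix carrying columns $z_t\in\mathcal{A}_{j-1}$; the cycle condition then reads $A_{j-1}z_t=0$ for every $t$ together with $A_jy_s=\sum_t P_{st}z_t$ for every $s$. Reducing the second equation modulo $\im A_j$ forces the tuple $\phi\equiv([z_1],\ldots,[z_c])\in H_{j-1}(\mathcal{A})^c$ to lie in $H_{j-1}(\mathcal{A})\otimes\ker P$, and since adding a boundary $C_{j+1}(Y',Z')$ leaves $\phi$ unchanged, its (non)vanishing is a class invariant that drives the entire case split.

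If $\phi\neq 0$, I expand $\phi=\sum_\alpha h_\alpha\otimes k^{(\alpha)}$ against a basis $\{h_\alpha\}$ of $H_{j-1}(\mathcal{A})$. At least one $k^{(\alpha)}\in\ker P$ is nonzero and therefore has weight at least $\delta$; linear independence of $\{h_\alpha\}$ identifies the support $T=\{t:[z_t]\neq 0\}$ with $\bigcup_\alpha\mathrm{supp}(k^{(\alpha)})$, so $|T|\geq\delta$. Each $z_t$ with $t\in T$ is a nontrivial cycle of $\mathcal{A}_{j-1}$ and hence has weight at least $d_{j-1}$, giving $\wgt(Z)\geq |T|\,d_{j-1}\geq d_{j-1}\delta$.

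If $\phi=0$, every $z_t=A_jw_t$ is a boundary; collecting the lifts into $W=[w_1|\cdots|w_c]$ and subtracting $C_{j+1}(0,W)$ from $x$ produces a homologous cycle $(\tilde Y,0)$ with $\tilde Y=Y-(-1)^j WP^T$ and columns $\tilde y_s\in\ker A_j$. A direct computation shows this is itself a boundary precisely when $([\tilde y_1],\ldots,[\tilde y_r])\in H_j(\mathcal{A})\otimes\im P$. In case (ii), $r=u$ gives $\im P=\mathbb F_2^r$, so this condition is automatic and $[x]=0$; hence only the $\phi\neq 0$ branch can produce nontrivial classes, yielding $d_j'\geq d_{j-1}\delta$. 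In case (i), $r>u$ leaves a nontrivial annihilator $(\im P)^\perp$, and a duality argument on $H_j(\mathcal{A})\otimes(\mathbb F_2^r/\im P)$ produces a nonzero $\lambda\in\mathbb F_2^r$ with $\lambda P=0$ and $\sum_s\lambda_s[\tilde y_s]\neq 0$ in $H_j(\mathcal{A})$. The combination $\lambda(Y)\equiv\sum_s\lambda_s y_s$ coincides with $\lambda(\tilde Y)$ precisely because $\lambda P=0$ cancels the $WP^T$ contribution, so $\lambda(Y)$ is a cycle of $\mathcal{A}_j$ representing this nontrivial class; hence $\wgt(\lambda(Y))\geq d_j$, and since $\lambda(Y)$ is an $\mathbb F_2$-sum of columns of $Y$, the triangle inequality yields $\wgt(Y)\geq\wgt(\lambda(Y))\geq d_j$.

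I expect the main obstacle to be the duality step in the $\phi=0$ branch of case~(i): it requires exhibiting a single $\lambda\in(\im P)^\perp$ that detects the nontrivial image of the tuple $([\tilde y_1],\ldots,[\tilde y_r])$ in $H_j(\mathcal{A})\otimes(\mathbb F_2^r/\im P)$, which depends on non-degeneracy of the natural pairing, and on noticing that the lift-independence $\lambda(Y)=\lambda(\tilde Y)$ is exactly what converts the bound on $\wgt(\lambda(\tilde Y))$ into a bound on $\wgt(Y)$ rather than on the weight of the possibly less-weighty modified cycle. The degenerate indices $j=0$ (where the $Z$-block is absent and only the $\phi=0$ branch survives) and $j=m+1$ (where the $Y$-block is absent and only the $\phi\neq 0$ branch survives) should reduce correctly under the conventions that missing boundary matrices are zero-sized and that trivial homology groups have distance $\infty$, but I would verify these boundary cases explicitly.
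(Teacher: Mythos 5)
Your proof is correct, but it takes a genuinely different route from the paper's. The paper argues combinatorially by shortening: it restricts $A_j$ to the support of the first block (where $w_1<d_j$ forces the shortened homology to be trivial), cancels the boundary components $f_\ell\in\im A_j$ of the second block by equivalence transformations, then shortens $P$ to the fewer-than-$\delta$ surviving column indices so that the rank formula (\ref{eq:stp-n-and-k}) makes the whole shortened product complex homologically trivial, and finally lifts the resulting boundary representation back to the original complex. You instead compute the homology class of the cycle directly in the K\"unneth decomposition $H_j({\cal C})\cong H_{j-1}({\cal A})\otimes\ker P\;\oplus\;H_j({\cal A})\otimes(\mathbb{F}_2^r/\im P)$: your invariant $\phi$ is the first component, and the tuple $([\tilde y_s])$ modulo $\im P$ is the second, so the case split $\phi\ne0$ versus $\phi=0$ is exactly the split by which K\"unneth summand detects nontriviality. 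The weight bounds then come from counting supports ($|T|\ge\delta$ columns each of weight $\ge d_{j-1}$) and from the dual-functional trick $\lambda(Y)=\lambda(\tilde Y)$ for $\lambda\in(\im P)^\perp$, which correctly transfers the bound $\wgt(\lambda(\tilde Y))\ge d_j$ back to the \emph{original} block $Y$ rather than to the modified representative. Your approach buys a cleaner treatment of the degenerate cases ($d_{j-1}=\infty$ or $\delta=\infty$ force $\phi=0$; $d_j=\infty$ forces the $\phi=0$ branch to be a boundary) and applies all weight estimates to the given representative without the lifting-back bookkeeping of Steps 1 and 3 of the paper; the paper's shortening argument is closer in spirit to classical puncturing/shortening of codes and reuses the rank formula it has already established. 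Both yield the same bounds in cases (i) and (ii), and your verification of the end indices $j=0$ and $j=m+1$ goes through exactly as you sketch.
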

%%% For the reference, we are looking at  
%%% \begin{eqnarray*}
%%%   C_j&=&\left(
%%%     \begin{array}[c]{c|c}
%%%       A_j\otimes E_r&E_{n_{j-1}}\otimes P\\ \hline 
%%%                     & A_{j-1}\otimes E_c
%%%     \end{array}\right),\\
%%%   C_{j+1}&=&\left(
%%%     \begin{array}[c]{c|c}
%%%       A_{j+1}\otimes E_r&       E_{n_{j}}\otimes P\\ 
%%%       \hline 
%%%                         & A_{j}\otimes E_c
%%%     \end{array}\right).
%%% \end{eqnarray*}
\begin{proof}%[Proof of Theorem \ref{th:lower-distance-bnd}]
  Start with  (i). Take a block vector ${e}=({e}_1|{e}_2)$, with
  ${e}_1\in\mathbb{F}_2^{n_j r}$, ${e}_2\in\mathbb{F}_2^{n_{j-1} c}$,
  with component weights $w_1\equiv \wgt({e}_1)<d_j$, and
  $w_2\equiv \wgt({e}_2)<d_{j-1}\delta$, and assume $C_j {e}^T=0$.  We
  are going to show that ${e}$ is a linear combination of columns of
  $C_{j+1}$. 

  Step 1: This step is needed if $d_j$ is finite; otherwise let
  $C_j'=C_j$, $C_{j+1}'=C_{j+1}$, $e'=e$, and proceed to step 2.  Mark
  the columns in $A_j$ which are incident on non-zero positions in
  ${e}_1$.  That is, write 
  $$e_1=\sum_{i=1}^r a_i\otimes x_i,$$ where
  $a_i\in\mathbb{F}_2^{n_j}$, and $x_i\in \mathbb{F}_2^r$ with the
  only non-zero bit at position $i$.  Take $I_0$ the union of the
  supports of all vectors $a_i$.  Denote the corresponding submatrix
  of $A_j$ as $A^{(0)}_j=A_j[I_0]$; this is the generating matrix of a code
  $C_{A_j}$ punctured at the positions not in $I_0$.  Further, denote
  $A^{(0)}_{j+1}$ a transposed generating matrix of the code
  $\mathcal{C}_{A_{j+1}^T}$ shortened to $I_0$; it is obtained from a
  linear combination of columns of $A_{j+1}$ by dropping rows not in
  $I_0$.  
  
  By construction, $n^{(0)}_j\equiv |I_0|\le w_1$; since $w_1<d_j$,
  the homology group $H(A^{(0)}_j,A^{(0)}_{j+1})$ is trivial.  Now add
  a set of linearly independent columns from the remaining columns in
  $A_j$ into $A^{(0)}_j$ to get $A'_j=A_j[I_1]$, such that
  $|{I_1}|-|{I_0}|=\rank(A'_j)-\rank(A^{(0)}_j)$ and in addition
  $\rank(A'_j)=\rank(A_j)$. Similarly, denote $A'_{j+1}$ a transposed
  generating matrix of the code $\mathcal{C}_{A_{j+1}^T}$ shortened to
  $I_1$.
  %$A'_{j+1}$ can be obtained from a linear combination of $A_{j+1}$ by dropping rows not in $I_1$. 
  Then $H(A'_j,A'_{j+1})$ still has zero rank, and $H(A_{j-1},A_j)=H(A_{j-1},A'_j)$. 
  Use Eq.~(\ref{eq:tp-matrices}) to construct the
  corresponding matrices $C_j'$ and $C_{j+1}'$ and define the
  shortened vectors $e_1'=\sum_i a_i[I_1]\otimes x_i$,
  $e'=(e_1'|e_2)$.  Since we only removed zero positions, the new
  vector satisfies $C_j' (e')^T=0$.  Also, if there is a vector
  $\alpha'\in {\cal C}_{j+1}'$ such that $(e')^T=C_{j+1}'(\alpha')^T$,
  then necessarily $e^T=C_{j+1}\alpha^T$ with some
  $\alpha\in{\cal C}_{j+1}$.

    Since $H(A'_j,A'_{j+1})$ is trivial, in the
  next step we can construct a vector $\bar e'\simeq e'$ equivalent to
  $e'$ without worrying about the weight of its first block.

  Step 2: Consider the decomposition
  \begin{equation}
    {e}_2=\sum_{\ell=1}^c
    {f}_\ell\otimes{y}_\ell,\;f_\ell\in\mathbb{F}_2^{n_{j-1}},
    \label{eq:second-KP-representation}  
\end{equation}
where $ y_\ell\in \mathbb{F}_2^c$ has the only non-zero bit at
$\ell$. The identity $C_{j}'({e}')^T=0$ implies
$A_{j-1}{f}_\ell^T=0$ for any $1\le \ell\le c$.  For those $\ell $
where ${f}_{\ell}^T$ is linearly dependent with the columns of $A_j'$,
${f}_\ell^T= A_j'\alpha_\ell^T$ with some
$\alpha_\ell\in {\cal C}_j'=\mathbb{F}_2^{n_j'}$, render this vector
to zero by the equivalence transformation
$$
({e}')^T\to ({e}')^T+C_{j+1}' (0|\alpha_\ell\otimes{y}_\ell)^T.
$$ Such a transformation only
affects one vector ${f}_\ell$.  The resulting vector
$\bar{{e}}'=({e}_1'|{e}_2')$ has the second block of weight
$\wgt({e}_2')\le \wgt({e}_2)<d_{j-1}\delta$, it satisfies
$C_{j}'(\bar{{e}}')^T=0$, and in its block
representation (\ref{eq:second-KP-representation}) the remaining
non-zero vectors ${f}_\ell \in H(A_{j-1},A'_j)$ have weights $d_{j-1}$ or larger.

Step 3: For sure, there remains fewer than $\delta$ of non-zero
vectors ${f}_\ell$.  Thus, in a decomposition,
${e}_2'=\sum_{j=1}^{n_{j-1}}{z}_j\otimes {c}_j$, where
$z_j\in\mathbb{F}_2^{n_{j-1}}$ have the only non-zero bit at $j$, and
$c_j\in \mathbb{F}_2^c$, the union of supports of the vectors ${c}_j$,
$I_2$, has a length $c'\equiv |I_2|<\delta$.  Indeed, $I_2$ is
just the set of the indices $\ell$ corresponding to the remaining
non-zero vectors ${f}_\ell$.  Construct a matrix $P'=P[I_2]$ by
dropping the columns of $P$ outside of $I_2$.  Since there are fewer
than $\delta$ columns left, $c'<\delta$, the resulting classical code
contains no non-zero vectors, $c'=\rank P'$.  Construct the modified
matrices $C_j''$ and $C_{j+1}''$ and define the shortened vectors
$e_2''=\sum_{j=1}^{n_0}{z}_j\otimes {c}_j[I_2]$ and $e''=(e_1'|e_2'')$
such that $C_j''(e'')^T=0$.  Now, after we trimmed the columns of both
$A_j$ and of $P$, according to Eq.~(\ref{eq:stp-n-and-k}), the
homology group $H( C_j'', C_{j+1}'')$ is trivial.  This implies that
${e}''$ must be a linear combination of the columns of $ C_{j+1}''$,
that is, $(e'')^T=C_{j+1}'' \beta^T$, for some binary vector $\beta$.  

The transformation from $C_{j+1}'$ to $C_{j+1}''$ amounts to dropping
some columns in the right block of $C_{j+1}'$, and the matching rows
from the lower block.  The rows removed to obtain $e''$ correspond to
zero positions in $\bar e'$.  This implies that $\bar e'$ can be also
obtained as a linear combination of columns of $C_{j+1}'$,
$(\bar e')^T=C_{j+1}'(\beta')^T$.  Combined with the equivalence
transformation in Step 2, we get $(e')^T=C_{j+1}'(\alpha')^T$; the
construction of Step 1 then implies existence of
$\alpha\in{\cal C}_{j+1}$ such that $e^T=C_{j+1}\alpha^T$ for the
original two-block vector $e=(e_1|e_2)$.  Thus, any such $e$ with
block weights $w_1<d_{j}$ and $w_2<d_{j-1}\delta$ which satisfies
$C_j{e}^T=0$ is necessarily a linear combination of the columns of
$C_{j+1}$.  This guarantees $d_j'\ge \min(d_{j},d_{j-1}\delta)$.

To complete the proof, consider the case (ii).  Here, step 1 can be
omitted; the matrices resulting from steps 2 and 3 alone would give
trivial homology group, regardless of the weight $\wgt({e}_1)$
of the first block.  Thus, in this case we get the lower bound
$d_j'\ge d_{j-1}\delta$.
\end{proof}
Let us now consider tensor products of several $1$-complexes.  Basic
parameters such as space dimensions, row and column weights, or
homology group distances do not depend on the order of the terms in
the product.  Further, if the matrices used to construct one-complexes
are $(\upsilon,\omega)$-sparse, that is, their column and row weights
do not exceed $\upsilon$ and $\omega$, respectively, the matrices in
the resulting $m$-chain complex are $(m\upsilon,m\omega)$-sparse.

As the first example, consider an $r\times c$ full-row rank binary
matrix $P$ with $r<c$, and assume that a binary code
$\mathcal{C}_{P}^\perp$ with the parity check $P$ has distance
$\delta$.  The $1$-complex $\mathcal{K}\equiv \mathcal{K}(P)$ has two
non-trivial spaces of dimensions $r$ and $c$; the corresponding
homology groups have ranks $0$, $\kappa$ and the distances $\infty$,
$\delta$.  The $1$-complex $\tilde{K}\equiv\mathcal{K}({P}^T)$
generated by the transposed matrix has equivalent spaces taken in the
opposite order, with the same homology group ranks, but the distances
are now $1$ and $\infty$, respectively.  It is easy to see that in any
chain complex constructed as tensor products of $\mathcal{K}$ and/or
$\tilde{\mathcal{K}}$, there is going to be only one homology group
with a non-zero rank.  Since order of the products is not important,
we will write these as powers.  For $(a+b)$-complex 
$\mathcal{K}^{(a,b)}\equiv\mathcal{K}^{\times a}\times
\tilde{\mathcal{K}}^{\times b}$, the only non-trivial homology group
is $H_a(\mathcal{K}^{(a,b)})$; the corresponding space has the
dimension
$$
n_a(\mathcal{K}^{(a,b)})=\sum_{i=0}^a c^{2i} r^{a+b-2i}{a\choose
  i}{b\choose i}<(r+c)^{a+b},
$$
homology group rank $\kappa^{a+b}$, and distance $\delta^a$.  The
corresponding quantum CSS code has the conjugate distances $\delta^a$
and $\delta^b$, and its stabilizer generators have weights not
exceeding $(a+b)\max(\omega,\upsilon)$.  Good weight-limited classical
codes with finite rates $\kappa/c$ and finite relative distances
$\delta/c$ can be obtained from ensembles of large random
matrices\cite{Gallager-1962,Gallager-book-1963,Litsyn-Shevelev-2002,
  Richardson-Shokrollahi-Amin-Urbanke-2001}.  Any of these can be used
in the present construction.  Then, for any pair $(a,b)$ of natural
numbers, we can generate weight-limited q-LDPC codes with finite rates
and the distances $d_X=\delta^a$, $d_Z=\delta^b$ whose
product scales linearly with the code length.  QHP codes are a special
case of this construction with $a=b=1$. 

Unlike in the case of QHP codes, with any $a>1$, $b>1$, the rows of
matrices $G_X=K_{a}\equiv K_a(\mathcal{K}^{(a,b)})$, $G_Z=K_{a+1}^T$
satisfy a large number of linear relations resulting from the
orthogonality with the matrices $K_{a-1}$ and $K_{a+2}$, respectively.
These can be used to correct syndrome measurement errors.  Even though
the resulting syndrome codes do not have large distances (with a
finite probability some errors remain), the use of such codes in
repeated measurement setting could simplify the decoding and/or
improve the decoding success probability in the case of adversarial
noise\cite{Campbell-2018}.  Such improvements with stochastic noise
have been demonstrated numerically in the case of $4D$ toric codes in
Ref.~\onlinecite{Breuckmann-Duivenvoorden-Michels-Terhal-2017}.

In conclusion, we derived an explicit expression for the distances of
the homology groups in a tensor product of two chain complexes, in the
special case where one of the complexes has length two.  Immediate use
of this result is in theory of quantum LDPC codes.  Our result greatly
extends the family of QHP codes whose parameters are known explicitly.
Higher-dimensional QHP codes can be especially useful in
fault-tolerant quantum computation, to optimize repeated syndrome
measurement in the presence of measurement errors.
 
In addition, we believe that the lower bound on the distance in
Theorem \ref{th:lower-distance-bnd} can be extended to a general
product of two chain complexes.  Indeed, Eq.~(\ref{eq:tp-boundary})
implies that the corresponding block matrices have at most two
non-zero blocks in each row and each column; similar steps can be used
in a proof.  If this is the case, the r.h.s.\ in
Eq.~(\ref{eq:tp-dj-upper-bnd}) would give explicitly the distances,
not just an upper bound.  Such a result could have substantial
applications in many areas of science where homology is used.

\begin{acknowledgments}
  LPP is grateful to Jean-Pierre Tilich for illuminating discussions,
  and to the Institute Henri Poincar\'e for hospitality.  This
  research was supported in part by the NSF Division of Physics via
  Grants No.\ 1416578 and 1820939.
\end{acknowledgments}

\bibliography{lpp,qc_all,more_qc,ldpc,linalg,percol,sg}
\end{document}